\newcommand{\be}{\begin{equation}}
\newcommand{\ee}{\end{equation}}
\newcommand{\ba}{\begin{eqnarray}}
\newcommand{\ea}{\end{eqnarray}}
\newcommand{\ketbra}[2]{|#1\rangle \langle #2|}
\newcommand{\tr}{\operatorname{Tr}}
\newcommand{\etal}{{\it{et. al. }}}
\newtheorem{definition}{Definition}
\newtheorem{example}{Example}
\newtheorem{thm}{Theorem}
\newtheorem{remark}{Remark}
\begin{document}
\title{Detecting genuine multipartite entanglement in steering scenarios}   
   \author{C. Jebaratnam}
   \email{jebarathinam@iisermohali.ac.in;jebarathinam@gmail.com}
\affiliation{Indian Institute of Science Education and Research Mohali, 
Sector-81, S.A.S. Nagar, Manauli 140306, India.}
\affiliation{Department of Physics, Indian Institute of Technology Madras, 
Chennai 600036, India}
\date{\today}
\begin{abstract}
Einstein-Podolsky-Rosen (EPR) steering is a form of quantum nonlocality which is intermediate between entanglement and Bell nonlocality.
EPR steering is a resource for quantum key distribution that is device independent on only one side in that it certifies bipartite entanglement when 
one party's device is not characterized while the other party's device is fully characterized.  
In this work, we introduce two types of genuine tripartite   EPR-steering,    and 
derive two steering inequalities to detect them. In a semi-device-independent scenario where only the dimensions of 
two parties are assumed,
the correlations which violate one of these inequalities also certify genuine tripartite 
entanglement. 
It is known that Alice can demonstrate bipartite EPR-steering to 
Bob if and only if her measurement settings are incompatible. 
We demonstrate that quantum correlations can also detect tripartite EPR-steering from Alice to Bob and Charlie,
even if Charlie's measurement settings are compatible.
\end{abstract}
\pacs{03.65.Ud, 03.67.Mn, 03.65.Ta}
\maketitle
\section{Introduction}
Entanglement, Einstein-Podolsky-Rosen (EPR) steering, and Bell nonlocality are three inequivalent forms of nonlocality in quantum physics 
\cite{WJD,enteprnl}. The observation of Bell nonlocality \cite{bell64, BNL} implies the presence of entanglement without the need for the 
detailed characterization of the measured systems as well as the measurement operators. For this reason, Bell nonlocality  has been used 
as a resource for device-independent (DI) quantum information processing \cite{DQKD,Pironioetal}. EPR-steering is a  weaker  form     of 
quantum nonlocality \cite{Schrodinger,WJD} and is witnessed by violation of a steering inequality \cite{CJWR,EPRsi,CFFW}. 
EPR-steering is a resource for one-side-device-independent ($1$SDI) quantum key distribution \cite{SDIQKD}.
This follows from the fact that the observation of EPR-steering in bipartite systems certifies entanglement with measurements on 
one side characterized and the other side uncharacterized.  

The observation of Bell nonlocality or EPR-steering also implies the presence of another nonclassical feature, which is incompatibility of 
measurements \cite{BNL,IncomN}. 
The notion of commutativity does not properly capture the incompatibility of generalized measurements, i.e., positive-operator-valued measurements (POVMs). 
The notion of joint measurability \cite{nJM}, which is inequivalent to commutativity, is a natural choice to capture the incompatibility of POVMs. 
Recently, it has been shown that a set of POVMs can be used to demonstrate bipartite EPR-steering 
if and only if (iff) it is nonjointly measurable \cite{IncomN, JMguhne,Uola}. In other words,
Alice's measurement settings are incompatible if she
can demonstrate EPR-steering to Bob. On the other hand,
Alice can ``always find a quantum state'' to demonstrate  
EPR-steering to Bob if her measurement settings are incompatible. 

In the multipartite scenario, various approaches to DI verification of genuine entanglement are known 
\cite{SI,multiSI,multiSI1,Banceletal1, DImulti,DImulti1,DImulti2,selftest}. Violation of the Bell-type inequalities which
detect genuine nonlocality \cite{SI,multiSI,multiSI1,Banceletal1} belongs to one of these approaches. 
Verification of multipartite entanglement in partially DI scenarios, where some of the parties' measurements are trusted,
has also been characterized \cite{SDMulti,Gsteer,SDMulti1,SDMulti2}.
In Ref. \cite{Gsteer}, genuine multipartite 
forms of EPR-steering have been developed and criteria to detect them have been derived. The violation of these criteria ensures that steering 
is shared among all subsystems of the multipartite system.
The observation of genuine multipartite steering implies the presence of genuine 
multipartite entanglement in a $1$SDI way.
In Ref. \cite{SDMulti2}, tripartite steering inequalities have been derived to detect genuine tripartite entanglement in $1$SDI scenarios
where either one or two parties' measurements are uncharacterized.

In this paper, we consider two types of tripartite steering scenarios where two parties' measurements 
are fully characterized (see Fig. \ref{plotine}).
We derive two inequalities to detect genuine tripartite steering in these two $1$SDI scenarios.
We argue that the correlations which violate one of these steering 
inequalities detect genuine entanglement in a semi-DI way \cite{SDI} as well.
That is, when the Hilbert-space dimensions of two subsystems are constrained, the correlations which exhibit genuine
tripartite steering also detect genuine entanglement. 
We demonstrate that quantum correlations can also detect tripartite EPR-steering from Alice to Bob and Charlie, even if Charlie's measurement settings are compatible.



The paper is organized as follows. In Sec. \ref{biNL}, we discuss in detail the three notions of bipartite nonlocality. 
In Sec. \ref{Gsteer}, we introduce two notions of genuine tripartite EPR-steering and we derive the steering inequalities which detect them. 
In Sec. \ref{IncVsEn}, we illustrate that incompatibility of POVMs  is not necessary for detecting tripartite steering.
Conclusions are presented in Sec. \ref{Cnc}.

\begin{figure}
\centering
\includegraphics[width=0.40\textwidth]{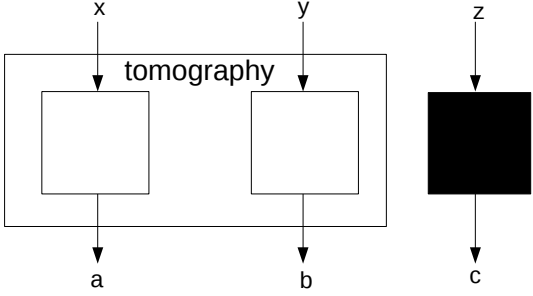} 
\caption{\textit{Depiction of the one-side-device-independent scenario that we consider for deriving the two steering inequalities:} 
Alice and Bob's subsystems and measurements are fully characterized.
Charlie's measurements are treated as black box, i.e., Charlie has no knowledge about the measurement operators as well as the measured system.
}\label{plotine}
\end{figure}
\section{Bipartite nonlocality}\label{biNL}
Consider a bipartite scenario in which two spatially separated parties, Alice and Bob, perform local measurements on a composite quantum system 
described by a density operator $\rho\in \mathcal{B}(\mathcal{H}_A\otimes \mathcal{H}_B)$. The correlation between the outcomes is described 
by the conditional probability distributions of getting the outcomes $a$ and $b$ given the measurements $A_x$ and $B_y$ on Alice and 
Bob's sides: $P(ab|A_xB_y)$, where $x$ and $y$ label the measurement choices. Quantum theory predicts the correlation as follows:
\be
P(ab|A_xB_y)=\tr (\rho M_{a|x} \otimes M_{b|y}),
\ee
where $M_{a|x}$ and $M_{b|y}$ are measurement operators.

\textit{Bell nonlocality.} A correlation is Bell nonlocal if it cannot 
be explained by the local hidden variable (LHV) model \cite{BNL},
\be
P(ab|A_xB_y)=\sum_\lambda p_\lambda P_\lambda(a|A_x)P_\lambda(b|B_y),
\ee
for some hidden variable $\lambda$ with probability distribution $p_\lambda$.
In the case of two dichotomic measurements per side, i.e., $x,y\in\{0,1\}$ and $a,b\in\{-1,+1\}$, the  
correlation has a LHV model iff it satisfies the Bell--Clauser-Horne-Shimony-Holt (CHSH) inequality \cite{chsh},
\be
\braket{A_0B_0+A_0B_1+A_1B_0-A_1B_1}_{LHV}\le2, \label{BCHSH}
\ee
and its equivalents \cite{Fine}. Here, $\braket{A_xB_y}=\sum_{ab}abP(ab|A_xB_y)$. Quantum correlations violate the Bell-CHSH 
inequality up to the Tsirelson bound $2\sqrt{2}$ \cite{tsi1}.

\textit{EPR-steering.}
Consider a $1$SDI scenario in which Alice has knowledge about her subsystem and which measurements she can perform,
while Bob performs black-box measurements (i.e., uncharacterized measurements). In this scenario, a
quantum correlation exhibits EPR-steering from Bob to Alice if it cannot be explained by the hybrid local 
hidden state (LHS)-LHV model \cite{CFFW},
\be
\tr (\rho M_{a|x} \otimes M_{b|y})=\sum_\lambda p_\lambda P(a|A_x,\rho_\lambda)P_\lambda(b|B_y),
\ee
where $P(a|A_x,\rho_\lambda)$ is the distribution arising from quantum state $\rho_\lambda$. Suppose Alice performs two orthogonal qubit 
projective measurements; for instance, $\sigma_x$ and $\sigma_y$, and Bob performs two dichotomic black-box measurements.  
Then the inequality
\be
\braket{A_0B_0-A_1B_1}_{2\times ?}^{LHS}\le\sqrt{2}, \label{EPRB}
\ee
where $A_0$ and $A_1$ are qubit projective measurements which satisfy $[A_0,A_1]=-1$, serves as the EPR-steering criterion \cite{CJWR,EPRsi}. 
Here, $2 \times ?$ indicates that Alice's subsystem is assumed to be qubit while Bob's subsystem is uncharacterized. Just like the Bell-CHSH 
inequalities, there are eight equivalent steering inequalities. For instance, the steering inequality 
$\braket{A_0B_1+A_1B_0}_{2\times?}^{LHS}\le\sqrt{2}$ can be obtained from the one in Eq. (\ref{EPRB}) by the transformation 
$a \rightarrow a\oplus x$, with $a\in\{0,1\}$ and $y\rightarrow y\oplus1$; here, $\oplus$ denotes addition modulo $2$.

Suppose Bob performs unknown measurements with POVM elements $M_{b|y}$ on his share of a bipartite quantum state 
$\rho^{AB}\in\mathcal{B}(\mathcal{H}_2\otimes \mathcal{H}_d)$.
The unnormalized conditional states on Alice's side are given by 
$\sigma^A_{b|y}=\tr(\openone \otimes M_{b|y}\rho^{AB})$.  
Alice can do state tomography to determine these conditional states.
The set of unnormalized conditional states is called an assemblage \cite{EPRLHS}.
The above scenario exhibits steering if the state assemblage does not have a LHS model,
\be
\sigma^A_{b|y}=\sum_\lambda p_\lambda P_\lambda (b|y) \rho_\lambda,   \label{SABA}
\ee
where $P_\lambda (b|y)$ are some conditional probability distributions 
and $\rho_\lambda$ are positive operators which satisfy 
$\sum_\lambda \tr  p_\lambda  \rho_\lambda=1$. 
The violation of a steering inequality as in Eq. (\ref{EPRB}) implies that the assemblage does not have a
decomposition as in Eq. (\ref{SABA}) \cite{EPRLHS}. Note that the state assemblage arising from any separable state has a LHS model. This implies that when the
$1$SDI scenario does not demonstrate steering, there always exists a separable state which reproduces the given state assemblage \cite{Piani}.

\textit{Nonseparability.} Nonseparability of quantum correlations arises as a failure of the quantum separable model. In this model, 
LHS description is assumed for both of the parties. When a quantum correlation exhibits nonseparability, 
it violates a LHS-LHS model,
\be
\tr (\rho M_{a|x} \otimes M_{b|y})=\sum_\lambda p_\lambda P(a|A_x,\rho^A_\lambda)P(b|B_y,\rho^B_\lambda),
\ee
where $P(a|A_x,\rho^A_\lambda)$ and $P(b|B_y,\rho^B_\lambda)$ are the distributions arising from quantum states $\rho^A_\lambda$ 
and $\rho^B_\lambda$, respectively. Any condition that is derived under the assumption of the above model is known as separability criterion
or entanglement criterion.

\subsection{Entanglement certification from CHSH and BB84 families}
Moroder and Gittsovich (MG) \cite{DbBi} explored the task of entanglement detection in various partially DI scenarios. 
For instance, MG introduced entanglement certification in the semi-DI scenario where only the Hilbert-space dimensions of the subsystems are assumed.
In Ref. \cite{Koon}, Goh \etal have defined a quantity  which gives certifiable lower bounds on the amount of entanglement  
in the semi-DI scenario. 
By using this quantity, Goh \etal studied the amount of two-qubit entanglement certifiable from the CHSH family defined as
\be
P_{CHSH}=\frac{2+ab(-1)^{xy}\sqrt{2}V}{8}, \label{chshfam}
\ee
and the BB84 family  defined as
\be
P_{BB84}=\frac{1+ab\delta_{x,y}V}{4}. \label{bb84fam}
\ee
The CHSH family with $V=1$ violates the Bell-CHSH inequality to its quantum bound of $2\sqrt{2}$, whereas the BB84 family  
with $V=1$ corresponds to the BB84 correlations \cite{DQKD}. 
The CHSH and BB84 families can be obtained from
the two-qubit Werner state, $\rho_W=V\ketbra{\Psi^-}{\Psi^-}+(1-V)\openone/4$,
where $\ket{\Psi^-}=(\ket{01}-\ket{10})/\sqrt{2}$,
for suitable projective measurements. 
The CHSH family is achievable with the measurement settings that give rise to the maximal violation of the CHSH inequality in Eq. (\ref{BCHSH}), whereas
the BB84 family is achievable with the settings that give rise to the maximal violation of the steering inequality in Eq. (\ref{EPRB}). 

As the CHSH family violates the Bell-CHSH inequality for $V>1/\sqrt{2}$, it certifies entanglement in a DI way in this range.
Since the BB84 family is local, it can also be produced by a separable state in the higher-dimensional space \cite{DQKD}. 
However, entanglement is certifiable from the BB84 family for $V>1/\sqrt{2}$ in a $1$SDI way, 
as it violates the steering inequality in this range.  

Note that the two-qubit Werner state is entangled iff $V>1/3$ \cite{Werner}.
If one assumes qubit dimension and which measurements are performed on both sides, the CHSH and BB84 families detect entanglement for $V>1/2$. 
This follows from the fact that these correlations violate a quantum separable model in this range.
This can be checked by the criteria to detect the nonexistence of a LHS-LHS model derived in Ref. \cite{UFNL}.
The CHSH family with $V>1/2$ violates the following LHS-LHS condition:
\be
\braket{A_0B_0+A_0B_1+A_1B_0-A_1B_1}_{2\times2}^{LHS}\le\sqrt{2},
\ee
with the assumption that Alice and Bob have access to measurements that give rise to the optimal 
violation of the Bell-CHSH inequality in Eq. (\ref{BCHSH}).
The BB84 family with $V>1/2$ violates the following LHS-LHS condition:
\be
\braket{A_0B_0-A_1B_1}_{2\times2}^{LHS}\le1, \label{Nonsep}
\ee
with the assumption that Alice and Bob have access to measurements that give rise to the optimal violation 
of the steering inequality in Eq. (\ref{EPRB}).
Goh \etal found that for $V>1/2$,
entanglement is certifiable from these two families if one assumes only qubit systems for Alice and Bob \cite{Koon}. 
\subsection{Incompatibility vs nonseparability}
We now note that incompatibility of measurements is not necessary to demonstrate nonseparability. 
For this, we consider a measurement scenario in which Alice has access to the set of two dichotomic qubit POVMs, 
$\mathcal{M}^\eta_A=\{M^\eta_{\pm|\hat{a}_x}|x=0,1\}$, with
elements 
\be
M^\eta_{\pm|\hat{a}_x}=\eta \Pi_{\pm|\hat{a}_x}+(1-\eta)\frac{\openone}{2}; \quad 0 \le \eta \le1, \label{nPVM}
\ee
and Bob has access to the set of two projective qubit measurements $\mathcal{M}_B=\{\Pi_{\pm|\hat{b}_y}|y=0,1\}$. 
Here $\Pi_{\pm|\hat{a}_x}=1/2(\openone \pm \hat{a}_x \cdot \vec{\sigma})$ and $\Pi_{\pm|\hat{b}_y}=1/2(\openone \pm \hat{b}_y \cdot \vec{\sigma})$
are projectors along the directions $\hat{a}_x$ and $\hat{b}_y$; $\vec{\sigma}$ is the vector of Pauli matrices.
Note that in the above measurement scenario,
Alice performs the noisy projective measurements. This implies that the correlation arising from the given two-qubit state $\rho$ 
satisfies the following relation:
\be
\tr (\rho M^\eta_{\pm|\hat{a}_x} \otimes \Pi_{\pm|\hat{b}_y})=\tr(\rho_\eta \Pi_{\pm|\hat{a}_x} \otimes \Pi_{\pm|\hat{b}_y}). \label{povm-proj}
\ee
Here, $\rho_\eta=\eta \rho+ (1-\eta)\openone/2 \otimes \rho_B$, with $\rho_B=\tr_A\rho$. In other words, the correlation arising from 
the given state for noisy projective measurements on Alice's side and projective measurements on Bob's side is equivalent to the correlation
arising from the noisy state for the projective measurements on both sides. In Refs. \cite{JMguhne,IncompLHV}, this connection between the 
noisy measurements and noisy states has been used to obtain new results for the former from known results of the latter. 

The set of two POVMs $\mathcal{M}^\eta_A$ with $\hat{a}_0\cdot \hat{a}_1=0$ is jointly measurable iff $\eta\le1/\sqrt{2}$ \cite{JMPauli1,JMPauli2}
and noncommuting for any $\eta>0$.   
We now illustrate that this noncommuting set of POVMs can be used to demonstrate nonseparability for $\eta>1/2$. 
\begin{example}\label{ex01}
Suppose Alice and Bob
share the singlet state, $\ket{\Psi^-}$, with Alice performing two noisy projective measurements along the directions $\hat{a}_0=\hat{x}$ and $\hat{a}_1=\hat{y}$ and
Bob performing two projective measurements along the directions $\hat{b}_0=-(\hat{x}+\hat{y})/\sqrt{2}$ and $\hat{b}_1=(-\hat{x}+\hat{y})/\sqrt{2}$.
From the relation given in Eq. (\ref{povm-proj}), it follows that the statistics arising from this setting are analogous to that arising from the Werner state
with the visibility $\eta$
for projective measurements on both sides along the directions given above.
Therefore, the statistics arising from the above scenario are equivalent to the CHSH family in Eq. (\ref{chshfam}) with $V$ replaced by $\eta$.
\end{example}
Since the above statistics exhibit nonseparability for $\eta>1/2$, 
the set of two POVMs which is jointly measurable in this range is also useful for this nonclassical task. 
This, in turn, implies that this compatible set of POVMs can also be used to detect two-qubit entanglement.
For the measurements given in example \ref{ex01}, the statistics arising from the Werner state are equivalent 
to the CHSH family in Eq. (\ref{chshfam}) with $V$ replaced by $\eta V$. Thus, these statistics detect entanglement for any $\eta V >1/2$. 
For $\eta=1/\sqrt{2}$, the statistics do not exhibit Bell nonlocality; however, entanglement is detected for $V>1/\sqrt{2}$.
\section{Tripartite nonlocality}\label{Gsteer}
We now turn to the tripartite case which is the focus of this work.
We restrict ourselves to the simplest scenario in which three spatially separated parties, i.e., Alice, Bob, and Charlie, 
perform two dichotomic measurements on their subsystems.
The correlation is described by the conditional probability distributions: $P(abc|A_xB_yC_z)$, where $x,y,z\in\{0,1\}$ and $a,b,c\in\{-1,+1\}$. 
The correlation exhibits standard nonlocality (i.e., Bell nonlocality) if it cannot be explained by the
LHV model,
\be
P(abc|A_xB_yC_z)=\sum_\lambda p_\lambda P_\lambda(a|A_x)P_\lambda(b|B_y)P_\lambda(c|C_z). \label{FLHV}
\ee
If a correlation cannot be reproduced by this fully LHV model, it does not necessarily imply that it exhibits genuine nonlocality \cite{SI,Banceletal1}.

In Ref. \cite{SI}, Svetlichny introduced the strongest form of genuine tripartite nonlocality 
(see Ref. \cite{Banceletal1} for the other two forms 
of genuine nonlocality).
A correlation exhibits Svetlichny nonlocality if it cannot be explained by a hybrid nonlocal-LHV (NLHV) model,
\begin{align}
&P(abc|A_xB_yC_z)\!=\!\sum_\lambda p_\lambda P_\lambda(a|A_x)P_\lambda(bc|B_yC_z)+\nonumber \\
&\!\sum_\lambda q_\lambda P_\lambda(ac|A_xC_z)P_\lambda(b|B_y)\!+\!\sum_\lambda r_\lambda P_\lambda(ab|A_xB_y)P_\lambda(c|C_z), \label{HNLHV}
\end{align}
with $\sum_\lambda p_\lambda+\sum_\lambda q_\lambda+\sum_\lambda r_\lambda=1$. The bipartite probability distributions in this decomposition can have arbitrary nonlocality.

Svetlichny derived Bell-type inequalities to detect the strongest form of genuine nonlocality \cite{SI}. For instance, one of the 
Svetlichny inequalities reads,
\ba
&&\braket{A_0B_0C_1+A_0B_1C_0+A_1B_0C_0-A_1B_1C_1}\nonumber \\
&&+\braket{A_0B_1C_1+A_1B_0C_1+A_1B_1C_0-A_0B_0C_0}\le4. \label{SI1}
\ea
Here $\braket{A_xB_yC_z}=\sum_{abc}abcP(abc|A_xB_yC_z)$. 
Quantum correlations violate the Svetlichny inequality (SI) up to $4\sqrt{2}$. A Greenberger-Horne-Zeilinger (GHZ)
state \cite{GHZ} gives rise to the maximal violation of the SI for a different choice of measurements which do not demonstrate the GHZ paradox \cite{UNLH}.

Bancal \etal \cite{Bancaletal} presented an intuitive approach to the SI in Eq. (\ref{SI1}).
For this, the SI was rewritten as follows:
\be
\braket{CHSH_{AB}C_1+CHSH'_{AB}C_0}_{NLHV}\le4. \label{SI}
\ee
Here, $CHSH_{AB}$ is the canonical CHSH operator given in Eq. (\ref{BCHSH}) and $CHSH'_{AB}=-A_0B_0+A_0B_1+A_1B_0+A_1B_1$ is one of its equivalents.
Bancal \etal observed that the input setting of Charlie defines which version of CHSH game Alice 
and Bob are playing. When $C$ gets the input $z=0$, $AB$ play the canonical $CHSH$ game; when $C$ gets the input $z=1$,
$AB$ play $CHSH'$. In Argument $1$ of Ref. \cite{Bancaletal}, Bancal \etal found that $AB$ 
play the average game $\pm CHSH \pm CHSH'$, where the signs indicate that which game they are playing depends on the outputs of $C$. 
It can be checked that the algebraic maximum of any of these average games is $4$ for the NLHV
model in Eq. (\ref{HNLHV}) with $\sum_\lambda r_\lambda=1$.

\subsection{Svetlichny steering}
We will derive a criterion for tripartite EPR-steering by exploiting the structure of the SI given in Eq. (\ref{SI}).
For this, we consider the following $1$SDI scenario. 
Alice and Bob have access to incompatible qubit measurements that give rise to violation of a Bell-CHSH inequality, while Charlie
has access to two black-box measurements. 
Suppose that $P(abc|A_xB_yC_z)$ cannot be explained by the following nonlocal LHS-LHV (NLHS)
model:
\begin{align}
P(abc|A_xB_yC_z)&=\sum_\lambda p_\lambda P(ab|A_xB_y,\rho_{AB}^\lambda)P_\lambda(c|C_z)+\nonumber \\
&\sum_\lambda q_\lambda P(a|A_x,\rho_A^\lambda)P(b|B_y,\rho_B^\lambda)P_\lambda(c|C_z), \label{NLHS}
\end{align}
where $P(ab|A_xB_y,\rho_\lambda)$ denotes the nonlocal probability distribution arising from two-qubit state $\rho_{AB}^\lambda$, and $P(a|A_x,\rho_A^\lambda)$ 
and $P(b|B_y,\rho_B^\lambda)$ are the distributions arising from qubit states $\rho_A^\lambda$ and $\rho_B^\lambda$. 
Then the quantum correlation exhibits genuine steering from Charlie to Alice and Bob.

We obtain the following criterion for genuine steering under the constraint of the above $1$SDI scenario.
\begin{thm}
If a given quantum correlation violates  
the steering inequality
\be
\braket{CHSH_{AB}C_1+CHSH'_{AB}C_0}_{2\times 2 \times ?}^{NLHS}\le2\sqrt{2}, \label{SIEPR}
\ee
then it exhibits genuine tripartite steering from Charlie to Alice and Bob.
Here, $2\times 2 \times ?$
indicates that Alice and Bob have access to known qubit measurements that demonstrate Bell nonlocality, while Charlie's measurements are uncharacterized. 
\end{thm}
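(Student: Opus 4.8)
The plan is to prove the contrapositive: every correlation $P(abc|A_xB_yC_z)$ that admits the decomposition (\ref{NLHS}) obeys $\langle CHSH_{AB}C_1+CHSH'_{AB}C_0\rangle\le 2\sqrt 2$. The functional on the left of (\ref{SIEPR}) is affine in the full probability table and (\ref{NLHS}) is a convex combination, so it is enough to bound the functional on a single $\lambda$-summand; moreover, within a summand the factor $P_\lambda(c|C_z)$ is itself a convex combination of deterministic single-party responses, so I would assume from the start that Charlie's contribution in the $\lambda$-summand is a fixed sign pair $(c_0^\lambda,c_1^\lambda)\in\{-1,+1\}^2$, i.e.\ $\sum_c c\,P_\lambda(c|C_z)=c_z^\lambda$.

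With Charlie's factor reduced to a scalar, a first-type summand of (\ref{NLHS}) contributes $\langle c_1^\lambda\,CHSH_{AB}+c_0^\lambda\,CHSH'_{AB}\rangle_{\rho_{AB}^\lambda}$ and a second-type summand contributes the same expression evaluated on the product state $\rho_A^\lambda\otimes\rho_B^\lambda$. Here I would use the algebraic identities underlying the Bancal \etal reading of the Svetlichny inequality recalled around (\ref{SI}): $CHSH_{AB}+CHSH'_{AB}=2(A_0B_1+A_1B_0)$ and $CHSH_{AB}-CHSH'_{AB}=2(A_0B_0-A_1B_1)$. Hence, for every sign pair, each $\lambda$-summand equals $\pm2\langle A_0B_1+A_1B_0\rangle$ or $\pm2\langle A_0B_0-A_1B_1\rangle$, i.e.\ twice one of the eight bipartite EPR-steering quantities of the form appearing in (\ref{EPRB}), evaluated with Alice's and Bob's characterized incompatible qubit measurements on the bipartite state that the $\lambda$-summand carries.

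What remains is a purely bipartite estimate: bound each such quantity by $\sqrt 2$, uniformly over the bipartite states that each term of (\ref{NLHS}) allows. For the product states $\rho_A^\lambda\otimes\rho_B^\lambda$ this is straightforward, since $\langle A_xB_y\rangle=\langle A_x\rangle\langle B_y\rangle$ and, because Alice's and Bob's settings are the fixed CHSH-optimal qubit measurements, the pairs $(\langle A_0\rangle,\langle A_1\rangle)$ and $(\langle B_0\rangle,\langle B_1\rangle)$ each lie in the unit disk, so $|\langle A_0B_0-A_1B_1\rangle|\le 1$ and $|\langle A_0B_1+A_1B_0\rangle|\le 1$ by Cauchy--Schwarz. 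For the two-qubit states appearing in the first-type summands one needs the sharper bound $|\langle A_0B_0-A_1B_1\rangle|\le\sqrt 2$ (and likewise for the companion) on the two-qubit correlations that occur there — the characterized-qubit analogue of the Tsirelson bound — which I would attempt via a sum-of-squares certificate written in these observables, again exploiting that the settings are the fixed incompatible CHSH ones. Granting the per-$\lambda$ bound $2\sqrt 2$, convexity closes the argument: $\langle CHSH_{AB}C_1+CHSH'_{AB}C_0\rangle=\sum_\lambda p_\lambda(\cdots)+\sum_\lambda q_\lambda(\cdots)\le\big(\sum_\lambda p_\lambda+\sum_\lambda q_\lambda\big)2\sqrt 2=2\sqrt 2$, using $\sum_\lambda p_\lambda+\sum_\lambda q_\lambda=1$; so a correlation violating (\ref{SIEPR}) cannot be of the form (\ref{NLHS}) and therefore exhibits genuine tripartite steering from Charlie to Alice and Bob.

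The step I expect to be the real obstacle is exactly this last bipartite bound for the first-type summands: showing that, with Alice's and Bob's measurements held fixed at the CHSH-violating settings, the relevant EPR-steering expression does not exceed $\sqrt 2$ on the two-qubit states that occur there (rather than the value $1$ available for separable states or the no-signalling value $2$). This is precisely where the quantum — not merely no-signalling — nature of the first term and the full characterization of Alice's and Bob's devices have to do all the work, and it is the point at which the Svetlichny algebraic value $4$ is tightened to $2\sqrt 2$; everything else is convexity bookkeeping.
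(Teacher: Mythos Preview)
Your proposal follows essentially the same strategy as the paper: reduce by convexity and by making Charlie deterministic to the bipartite ``average game'' $\pm CHSH_{AB}\pm CHSH'_{AB}$, and then bound this separately for the product-state (LHS--LHS) summands and for the two-qubit (nonlocal) summands of the NLHS model. The paper handles the product case by citing the bounds $|\langle CHSH\rangle|,|\langle CHSH'\rangle|\le\sqrt2$ for LHS--LHS \cite{UFNL}, whereas you collapse via the identity to $2|\langle A_0B_0-A_1B_1\rangle|$ and use Cauchy--Schwarz to get $2$; for the two-qubit summands the paper simply asserts ``it can be checked that $\langle\pm CHSH\pm CHSH'\rangle_{2\times2}\le2\sqrt2$'' and illustrates with the Tsirelson point, which is exactly the step you flag as the remaining obstacle and propose to attack by a sum-of-squares argument.
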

\begin{proof}
Note that in the $1$SDI scenario that we are interested in, $AB$ play the average game $\pm CHSH \pm CHSH'$. 
The maximum of any of these games cannot exceed $2\sqrt{2}$ if the correlation admits the NLHS model given in Eq. (\ref{NLHS}).
There are two cases which have to be checked:
(i) Suppose Alice and Bob have a LHS-LHS model, the expectation values of the CHSH operators are bounded by 
$-\sqrt{2}\le\braket{CHSH}^{LHS}_{2\times2}\le \sqrt{2}$ and  $-\sqrt{2}\le\braket{CHSH'}^{LHS}_{2\times2}\le \sqrt{2}$ \cite{UFNL}.
This implies that $\braket{\pm CHSH \pm CHSH'}^{LHS}_{2\times2}\le 2\sqrt{2}$. 
(ii) In case Bell nonlocality is shared by Alice and Bob, it can be checked that  
$\braket{\pm CHSH \pm CHSH'}_{2\times 2}\le 2\sqrt{2}$. For instance, the quantum correlation which exhibits maximal Bell nonlocality 
has $\braket{CHSH}=2\sqrt{2}$ and $\braket{CHSH'}=0$. Thus, the violation of the inequality in Eq. (\ref{SIEPR}) implies
the violation of the model in Eq. (\ref{NLHS}).
\end{proof}

For a given quantum state, genuine steering as witnessed by the steering inequality in Eq. (\ref{SIEPR}) originates from
measurement settings that give rise to Svetlichny nonlocality. 
For this reason, we call this type of genuine steering  Svetlichny steering.
\begin{definition}
A quantum correlation exhibits \emph{Svetlichny steering} if it cannot be explained by a model in which arbitrary
two-qubit Bell nonlocality is allowed between two parties with the third party locally correlated.
\end{definition}
Note that the SI is invariant under the permutation of the parties. This implies that the criterion to detect 
Svetlichny steering from $A$ to $BC$ or $B$ to $AC$ can be obtained from the steering inequality in Eq. (\ref{SIEPR}) by permuting the parties.

We consider the Svetlichny family defined as
\be
P_{Sv}=\frac{2+abc(-1)^{xy\oplus xz \oplus yz \oplus x \oplus y \oplus z \oplus 1}\sqrt{2}V}{16}, \label{SvF}
\ee
which is the tripartite version of the CHSH family in Eq. (\ref{chshfam}).
\begin{example}\label{ex1}
The Svetlichny family can be obtained from a noisy three-qubit GHZ state,
$\rho=V\ketbra{\Phi_{GHZ}}{\Phi_{GHZ}}+(1-V)\openone/8$, where $\ket{\Phi_{GHZ}}=\frac{1}{\sqrt{2}}(\ket{000}+\ket{111})$, for the measurements that 
give rise to the maximal violation of the SI; for instance,
$A_0=\sigma_x$, $A_1=\sigma_y$, $B_0=(\sigma_x-\sigma_y)/\sqrt{2}$, $B_1=(\sigma_x+\sigma_y)/\sqrt{2}$, 
$C_0=\sigma_x$, and $C_1=-\sigma_y$. 
\end{example}
Note that the noisy GHZ state given above is genuinely entangled iff $V>0.429$ \cite{Guhne}.
The Svetlichny family certifies genuine entanglement in a DI way for $V>1/\sqrt{2}$,
as it violates the SI in this range. 

The Svetlichny family can be written as a convex mixture of the local deterministic strategies when $V\le 1/\sqrt{2}$ \cite{Banceletal1}. This implies that
in this range, it can also arise from a separable state in the higher dimensional space \cite{DQKD}. 
However, the measurement statistics in example \ref{ex1} certify genuine entanglement for 
$V>1/2$ in a $1$SDI way, as these statistics violate the 
steering inequality in Eq. (\ref{SIEPR}).

By using the concept of steering, Bancal \etal \cite{Bancaletal} observed that the structure of the SI in Eq. (\ref{SI}) allows one to understand
its violation by genuinely entangled states: 
The SI should be violated iff Charlie's measurements prepare entangled states for Alice and Bob 
such that the average game $\pm CHSH \pm CHSH'>4$. 
Similarly, we understand violation of the steering inequality by the Svetlichny family as follows.
When the parties share the noisy GHZ state and observe the optimal violation of the steering inequality in Eq. (\ref{SIEPR}), we have
the following two situations.
First, Charlie's black-box measurements prepare the noisy Bell states, which are a mixture of a Bell state (i.e., a maximally 
entangled state) and white noise, with the visibility $V$
for Alice and Bob.
Second, Alice and Bob's measurements generate the CHSH family from the states steered by Charlie. 
These imply that the Svetlichny family certifies genuine tripartite entanglement for $V>1/2$ if one assumes only 
qubit dimension for two parties.

\subsection{Mermin steering}
We now derive a criterion for another form of tripartite EPR-steering from a Mermin inequality (MI) \cite{mermin},
\be
\braket{A_0B_0C_1+A_0B_1C_0+A_1B_0C_0-A_1B_1C_1}_{LHV}\le2. \label{MI0}
\ee
In the seminal paper \cite{mermin}, the MI was derived to demonstrate standard nonlocality of 
three-qubit correlations arising from the genuinely entangled states. 
For this purpose, noncommuting measurements
that do not demonstrate Svetlichny nonlocality were used.
Note that when the GHZ state maximally violates the MI, the measurements
that give rise to it exhibit the GHZ paradox \cite{UNLH}.

We rewrite the MI in Eq. (\ref{MI0}) as follows:
\be
\braket{Mermin_{AB}C_1+Mermin'_{AB}C_0}_{LHV}\le2, \label{MI}
\ee
where $Mermin_{AB}=A_0B_0-A_1B_1$ and $Mermin'_{AB}=A_0B_1+A_1B_0$ \footnote{The multipartite generalization of these 
operators generates the Mermin inequalities, hence the name \cite{mermin,allmult}.}. 
Note that these bipartite Mermin operators 
can be used to witness EPR-steering without Bell nonlocality as in Eq. (\ref{EPRB}).
Inspired by the structure of the MI in Eq. (\ref{MI}), we consider the following 
$1$SDI scenario. 
Alice and Bob have access to incompatible qubit measurements that give rise to 
EPR-steering without Bell nonlocality, while Charlie has access to two dichotomic black-box measurements.
Suppose that $P(abc|A_xB_yC_z)$ cannot be explained by the following steering LHS-LHV (SLHS) model:
\begin{align}
P(abc|A_xB_yC_z)&=\sum_\lambda p_\lambda P^{EPR}_L(ab|A_xB_y,\rho_{AB}^\lambda)P_\lambda(c|C_z)+\nonumber \\
&\sum_\lambda q_\lambda P(a|A_x,\rho_A^\lambda)P(b|B_y,\rho_B^\lambda)P_\lambda(c|C_z),\label{wSVEPR}
\end{align}
where $P^{EPR}_L(ab|A_xB_y,\rho_{AB}^\lambda)$ denotes the local probability distribution which exhibits EPR-steering.
Then the quantum correlation exhibits genuine steering from Charlie to Alice and Bob.

We obtain the following criterion which witnesses genuine steering in the above $1$SDI scenario.
\begin{thm}
If a given quantum correlation violates the steering inequality
\be
\braket{Mermin_{AB}C_1+Mermin'_{AB}C_0}_{2\times2\times?}^{SLHS}\le2, \label{MIEPR}
\ee
then it exhibits genuine tripartite steering from Charlie to Alice and Bob. Here Alice and Bob's 
measurements demonstrate EPR-steering without Bell nonlocality, while Charlie's measurements
are uncharacterized. 
\end{thm}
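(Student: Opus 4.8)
The plan is to follow the proof of Theorem~1 closely, with the canonical CHSH operators replaced by the bipartite Mermin operators and the bound $2\sqrt{2}$ replaced by $2$. The starting observation is the one used by Bancal \etal: in the SLHS model of Eq.~(\ref{wSVEPR}) Charlie's response factors out of \emph{both} summands as $P_\lambda(c|C_z)$, so Charlie is merely locally correlated. Writing each $P_\lambda(c|C_z)$ as a convex combination of deterministic responses and absorbing the mixing weights into $\lambda$, we may take Charlie's outputs $c_0^\lambda,c_1^\lambda\in\{-1,+1\}$ to be fixed by $\lambda$. For each $\lambda$ the quantity seen by Alice and Bob is then $c_1^\lambda\,Mermin_{AB}+c_0^\lambda\,Mermin'_{AB}$, i.e.\ one of the four sign variants $\pm Mermin_{AB}\pm Mermin'_{AB}$ of the average Mermin game.

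It therefore suffices to show $\braket{\pm Mermin_{AB}\pm Mermin'_{AB}}\le2$ for each of the two kinds of bipartite correlation permitted in Eq.~(\ref{wSVEPR}). First, for the separable term $P(a|A_x,\rho^\lambda_A)P(b|B_y,\rho^\lambda_B)$: these are products of local qubit distributions, hence Bell local; alternatively, for the characterized $A,B$ settings that optimally violate Eq.~(\ref{EPRB}), Eq.~(\ref{Nonsep}) gives $|\braket{Mermin_{AB}}^{LHS}_{2\times2}|\le1$ and, by the relabelling that maps one steering inequality to the other, $|\braket{Mermin'_{AB}}^{LHS}_{2\times2}|\le1$, so the sum is bounded by $2$. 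Second, for the term $P^{EPR}_L$: the defining feature of the SLHS model, as opposed to the NLHS model of Eq.~(\ref{NLHS}), is precisely that this distribution is Bell \emph{local} --- it merely carries a steerable assemblage --- hence it admits an LHV model. The key identity is that every sign variant $\pm Mermin_{AB}\pm Mermin'_{AB}$ is, up to sign flips of the observables $A_i$ or $B_j$, the canonical CHSH operator; indeed $Mermin_{AB}+Mermin'_{AB}=A_0B_0+A_0B_1+A_1B_0-A_1B_1$, and flipping signs produces the other three variants. Since sign flips of $\pm1$-valued observables preserve the LHV bound, $\braket{\pm Mermin_{AB}\pm Mermin'_{AB}}\le2$ for any Bell-local correlation. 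Combining the two cases and using $\sum_\lambda p_\lambda+\sum_\lambda q_\lambda=1$, convexity gives $\braket{Mermin_{AB}C_1+Mermin'_{AB}C_0}^{SLHS}_{2\times2\times?}\le2$, so a value exceeding $2$ rules out any decomposition of the form~(\ref{wSVEPR}) and certifies genuine tripartite steering from Charlie to Alice and Bob; the statements for steering from $A$ to $BC$ and from $B$ to $AC$ follow by the permutation symmetry of the Mermin operator.

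I expect the only real subtlety to lie in the second case: one must be careful \emph{not} to bound $Mermin_{AB}$ and $Mermin'_{AB}$ separately by their LHS-LHS value $1$, because in that term the $AB$ correlations need not be separable, only Bell local, and $\braket{Mermin_{AB}}$ alone can there reach its one-sided steering value $\sqrt{2}$ of Eq.~(\ref{EPRB}). What rescues the bound is that Charlie's classical output always welds $Mermin_{AB}$ and $Mermin'_{AB}$ into a \emph{single} CHSH-type functional, whose value on Bell-local data is capped at $2$ even when neither half alone is; the reduction of Charlie's response functions to deterministic ones is the only other step and is routine. It is worth noting that this bound coincides numerically with the LHV bound $2$ of the Mermin inequality in Eq.~(\ref{MI0}) --- what turns ``violation of a Mermin inequality'' into a certificate of genuine steering is solely the $1$SDI structure, namely that $A$ and $B$ use characterized qubit measurements that by themselves generate steering but no Bell nonlocality.
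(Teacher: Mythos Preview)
Your proof is correct and follows essentially the same route as the paper's: reduce Charlie to deterministic outputs so that $AB$ play the average game $\pm Mermin_{AB}\pm Mermin'_{AB}$, then bound this by $2$ separately in the LHS--LHS case and in the steering-but-Bell-local case. Your explicit identification of each sign variant $\pm Mermin_{AB}\pm Mermin'_{AB}$ with a CHSH operator is a nice sharpening of what the paper leaves as ``it can be checked,'' and your closing remark that the bound coincides with the LHV bound of the Mermin inequality is exactly the paper's opening shortcut (SLHS correlations are fully LHV, hence satisfy Eq.~(\ref{MI0})).
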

\begin{proof}
Notice that the correlations that admit the SLHS model in Eq. (\ref{wSVEPR})
also admit the standard LHV model in Eq. (\ref{FLHV}). This implies that if a given correlation violates the 
inequality in Eq. (\ref{MIEPR}), the correlation also violates the SLHS model.

Similar to the proof of the inequality in Eq. (\ref{SIEPR}), we will also show that the inequality in Eq. (\ref{MIEPR})
is satisfied if the correlation admits the SLHS model.
Notice that in the $1$SDI scenario that we are now interested in, $AB$ play the average game 
$\pm Mermin \pm Mermin'$. The algebraic maximum of any of these games is $2$ if the correlation admits 
the SLHS model. There are two cases that have to be checked now.
(i) Suppose Alice and Bob have a LHS-LHS model, the expectation values of the Mermin operators satisfy
$-1 \le \braket{Mermin}^{LHS}_{2\times2}\le 1$ and  $-1 \le \braket{Mermin'}^{LHS}_{2\times2}\le 1$ \cite{UFNL}.
This implies that $\braket{\pm Mermin \pm Mermin'}_{2\times2}^{LHS}\le2$.
(ii) In case EPR-steering is shared by Alice and Bob, it can be checked that  
$\braket{\pm Mermin \pm Mermin'}\le 2$. For instance, the quantum correlation which violates the steering inequality in Eq. (\ref{EPRB})
maximally 
has $\braket{Mermin}=2$ and $\braket{Mermin'}=0$. 
\end{proof}

For a given state, genuine steering as witnessed by the steering inequality in Eq. (\ref{MIEPR}) originates from measurement settings
that lead to standard nonlocality, which is witnessed only by the violation of the MI. For this reason, we call this type of tripartite steering 
 Mermin steering.
\begin{definition}
A quantum correlation exhibits \emph{Mermin steering} if it cannot be explained by a model in which arbitrary 
two-qubit EPR-steering without Bell nonlocality is allowed between two parties with the third party locally correlated.  
\end{definition}
Note that the MI is invariant under the permutation of the parties. This implies that the criterion to detect 
Mermin steering from $A$ to $BC$ or $B$ to $AC$ can be obtained from the steering inequality in Eq. (\ref{MIEPR}) by permuting the parties.

We consider the GHZ family defined as
\be
P_{GHZ}=\frac{1+abc\delta_{x\oplus y,z\oplus1}V}{8},
\ee
which is the tripartite version of the BB84 family in Eq. (\ref{bb84fam}).
\begin{example}\label{ex2}
The GHZ family can be obtained from the noisy three-qubit GHZ state for the measurements 
that give rise to the GHZ paradox; for instance, $A_0=\sigma_x$, $A_1=\sigma_y$, $B_0=\sigma_x$, $B_1=\sigma_y$,
$C_0=\sigma_x$, and $C_1=-\sigma_y$.
\end{example}
Since the GHZ family does not exhibit genuine nonlocality,
it does not certify genuine entanglement in a DI way. However, genuine entanglement is detected from the measurement statistics in example \ref{ex2} for $V>1/2$
in a $1$SDI way, as these statistics violate the steering inequality in Eq. (\ref{MIEPR}). 
Notice that when the GHZ family violates this steering inequality, Alice and Bob's measurement settings
generate the BB84 family from the noisy Bell states
with the visibility $V$ which are steered by Charlie.
This implies that genuine entanglement is certifiable
from the GHZ family for $V>1/2$ by assuming only qubit systems for two parties. 

\section{Incompatibility versus tripartite steering}\label{IncVsEn}
Following the approach of Ref. \cite{SDMulti2}, we will now discuss tripartite steerability in the $1$SDI scenarios (see Fig. \ref{plotine}) which we have considered. 
Let $\rho^{ABC}\in \mathcal{B}(\mathcal{H}_2\otimes \mathcal{H}_2 \otimes \mathcal{H}_d)$ denote the shared tripartite quantum state and $M_{c|z}$ 
denote the measurement operators on Charlie's side. The assemblage, i.e., the set of (unnormalized) conditional states on Alice and Bob's side, is given by
\be
\sigma^{AB}_{c|z}=\tr_C (\openone \otimes \openone \otimes M_{c|z} \rho^{ABC}).
\ee

Note that in examples \ref{ex1} and \ref{ex2}, there are genuinely tripartite entangled states which do not demonstrate
tripartite steering.
Suppose the $1$SDI scenario which uses genuine tripartite entanglement does not demonstrate tripartite steering. 
The biseparable state which can reproduce the given assemblage can be written as
\ba
\rho_{bisep}^{ABC}&=&\sum_\lambda p^{A:BC}_\lambda \rho^{A}_\lambda \otimes \rho^{BC}_\lambda +\sum_\mu p^{B:AC}_\mu \rho^{B}_\mu \otimes \rho^{AC}_\mu \nonumber \\
&+&\sum_\nu p^{AB:C}_\nu \rho^{AB}_\nu \otimes \rho^{C}_\nu.
\ea
Thus, the unsteerable assemblage admits the following LHS model:
\ba
\sigma^{AB}_{c|z}&=&\sum_\lambda p^{A:BC}_\lambda \rho^{A}_\lambda \otimes \sigma^{B}_{c|z,\lambda} +\sum_\mu p^{B:AC}_\mu \sigma^{A}_{c|z,\mu} \otimes \rho^{B}_\mu \nonumber  \\
&+&\sum_\nu p^{AB:C}_\nu P_\nu(c|z) \rho^{AB}_\nu, \label{UStri}
\ea
where $\sigma^{B}_{c|z,\lambda}=\tr_C(\openone \otimes M'_{c|z}\rho^{BC}_\lambda)$, $\sigma^{A}_{c|z,\mu}=\tr_C(\openone \otimes M'_{c|z}\rho^{AC}_\mu)$ 
and $P_\nu(c|z)=\tr (M'_{c|z} \rho^{C}_\nu)$. 

Suppose Charlie has access to POVMs which are compatible. 
Then there exists a parent POVM with measurement operators $G_\nu$ such that
\be
M_{c|z}=\sum_{\nu}D_{\nu}(c|z)G_\nu,
\ee
where $D_{\nu}(c|z)$ are positive numbers with $\sum_{c}D_{\nu}(c|z)=1$ \cite{Ali2009}. For these measurements, 
the state assemblage of Alice and Bob's side admits
the following decomposition:
\be
\sigma^{AB}_{c|z}=\sum_{c}D_{\nu}(c|z) \tr_C (\openone \otimes \openone \otimes G_\nu \rho^{ABC}). \label{compLHS}
\ee
Note that the above decomposition resembles that of the unsteerable assemblage given in Eq. (\ref{UStri}).
Therefore, if Charlie's measurement settings are compatible then there is no steering between Charlie and Alice-Bob.

Inspired by example \ref{ex01}, we consider the following measurement scenario.
\begin{example}\label{ex3}
Alice and Bob perform projective measurements along the directions $\hat{a}_0=\hat{x}$, $\hat{a}_1=\hat{y}$, $\hat{b}_0=(\hat{x}-\hat{y})/\sqrt{2}$ 
and $\hat{b}_1=(\hat{x}+\hat{y})/\sqrt{2}$. Charlie performs noisy projective measurements with visibility $\eta$ as in Eq. (\ref{nPVM}) along the 
directions $\hat{c}_0=\hat{x}$ and $\hat{c}_1=-\hat{y}$. For the above measurements, the statistics arising from the GHZ state, $\ket{\Phi_{GHZ}}$,
are equivalent to the Svetlichny family in Eq. (\ref{SvF}) with $V$ replaced by $\eta$.
\end{example}
Note that in the above example, Charlie's measurement settings are
compatible for $\eta \le 1/\sqrt{2}$. This implies that in this range, the state assemblage of Alice and Bob's side has the decomposition 
as in Eq. (\ref{compLHS}). Therefore, Charlie does not demonstrate tripartite steerability to
Alice and Bob for $\eta\le1/\sqrt{2}$.
However, the correlations in example \ref{ex3} violate the steering inequality in Eq. (\ref{SIEPR}) for any $\eta>1/2$.
Note that in this example, Bob and Charlie's measurement settings can be used to demonstrate Bell nonlocality
for $\eta>1/\sqrt{2}$ and nonseparability for $\eta>1/2$ (as in example \ref{ex01}). 
Therefore, the correlations in example \ref{ex3} exhibit Svetlichny steering from $A$ to $BC$ for $\eta>1/2$. 
\begin{remark}
Quantum correlations can also detect tripartite EPR-steering from Alice to Bob and Charlie even if Charlie's measurement
settings are compatible.
\end{remark}

\section{Conclusion}\label{Cnc}
We have introduced Svetlichny steering and Mermin steering to distinguish the presence of two types of genuine tripartite steering. 
We have derived the two inequalities which detect them by using the structure of the Svetlichny inequality and Mermin
inequality, which detect genuine nonlocality and standard nonlocality, respectively.
We find that genuine tripartite entanglement is certifiable from the correlations that violate one of these steering inequalities 
in a semi-device-independent way as well. That is, when qubit dimension is assumed for two parties,
the correlations which exhibit Svetlichny steering or Mermin steering also detect genuine tripartite entanglement (where measurements
on all subsystems are uncharacterized).
We demonstrate that quantum correlations can also detect 
tripartite EPR-steering from Alice to Bob and Charlie even if
Charlie's measurement settings are compatible. 
The two tripartite steering inequalities which detect Svetlichny steering and Mermin steering can be generalized to arbitrary number of systems by using the 
structure of $n$-partite Svetlichny inequality \cite{multiSI,multiSI1} and Mermin-Ardehali-Belinskii-Klyshko (MABK)
inequality \cite{mermin,multimermin,multimermin1,multimermin2,multimermin3}, respectively. 

\section*{Acknowledgement} 
I would like to thank Dr. J.-D. Bancal
and Prof. H. M. Wiseman for the helpful conversations on steering.
I wish to thank Dr. K. P. Yogendran and Prof. Suresh Govindarajan for useful comments.
I am thankful to Dr. Manik Banik for fruitful discussions. 
I am very thankful to an anonymous referee for 
the helpful comments and suggestions to improve the quality of the manuscript.
This work was partially funded by
Department of Science and Technology, via INSPIRE Project
No. PHY1415305DSTXPRAN, hosted at IIT Madras.
\bibliographystyle{apsrev4-1}
\bibliography{SDEW}
\end{document}